\newcommand{\RR}{\mathbb{R}}
\def\eqdef{\triangleq}
\newcommand{\m}{\mathcal}
\def\calX{\mathcal{X}}
\def\calY{\mathcal{Y}}
\newtheorem{theorem}{Theorem}
\newenvironment{proof}[1][Proof]{\noindent\textbf{#1.} }{\ \rule{0.5em}{0.5em}}
\def\etaKL{\eta_{\mathrm{KL}}}
\def\eqdef{\triangleq}
\begin{document}
\title{Strong Data Processing Constant is Achieved by Binary Inputs}

\author{Or Ordentlich %,~\IEEEmembership{Member,~IEEE}, 
and	Yury Polyanskiy %~\IEEEmembership{Senior Member,~IEEE} 
%	and Ofer Shayevitz%~\IEEEmembership{Member,~IEEE}
%\thanks{}
\thanks{O. Ordentlich is with the %Rachel and Selim Benin 
Hebrew University of Jerusalem, Israel (\texttt{or.ordentlich@mail.huji.ac.il}).  Y. Polyanskiy is with the MIT,
USA (\texttt{yp@mit.edu}).}% O. Shayevitz is with the Tel Aviv University, Tel Aviv, Israel (\texttt{ofersha@eng.tau.ac.il}).}
%\thanks{}
%\thanks{The work of O.O. was supported by the ISF under Grant 1791/17.
%	The work of Y.P. was supported (in part) by the National Science Foundation under Grant No CCF-17-17842, and by the
%	Center for Science of Information (CSoI), an NSF Science and Technology Center, under grant agreement CCF-09-39370. The work of O.S. was supported by the European Research Council, under grant agreement 639573.}
%\thanks{}
}

\date{}

\maketitle
\begin{abstract}
For any channel $P_{Y|X}$ the strong data processing constant is defined as the smallest number $\eta_{KL}\in[0,1]$ such that
$I(U;Y)\le \eta_{KL} I(U;X)$ holds for any Markov chain $U-X-Y$. It is shown that the value of $\eta_{KL}$ is given by
that of the best binary-input subchannel of $P_{Y|X}$. The same result holds for any $f$-divergence, verifying a conjecture of Cohen, Kemperman and Zbaganu (1998).
\end{abstract}

Consider an arbitrary channel $P_{Y|X}:\calX\to \calY$ with countable $\calX$. We define the strong data processing
inequality (SDPI) constant~\cite{ahlswede1976spreading}
\begin{equation}\label{eq:etakl_def}
	\etaKL = \sup {D(P_{Y|X} \circ P\| P_{Y|X}\circ Q) \over D(P\|Q)}\,, 
\end{equation}
where optimization is over all pairs  of distributions on $\calX$, denoted $P,Q\in \mathcal{P}(\calX)$, such that
$0<D(P\|Q)<\infty$, and $P_{Y|X}\circ P$ is the distribution of the output $Y$ when the input $X$ is distributed according to $P \in
\mathcal{P}(\calX)$. We refer to~\cite{pw17} for a
survey of the properties and importance of the SDPI, in particular for showing equivalence to the definition in the
abstract, and advertise~\cite{polyanskiy2020application} as a recent application in statistics.

%When input alphabet $\calX$ is binary the value of $\etaKL$ is relatively easy to compute~\cite[Appendix B]{pw17}
%\begin{align}
%\etaKL = \sup_{0<\beta<1} \beta(1-\beta)\int\frac{(dP_0-dP_1)^2}{\beta dP_0+(1-\beta)dP_1}\,,\nonumber
%\end{align}
%where we denoted $P_x \eqdef P_{Y|X=x}, x\in \{0,1\}$. Furthermore, \cite[Theorem 21]{pw17} shows
%$$ {H^2\over 2} \le \etaKL \le H^2 - {H^4\over 4}\,, \quad H^2 \eqdef H^2(P_0, P_1)\,,$$
%where $H^2(P,Q) = \int (\sqrt{dP} - \sqrt{dQ})^2$ is the squared Hellinger distance. When the alphabet $\calX$ is
%non-binary determination of $\etaKL$ appears to be hard. 
%, but~\cite{cohen1998comparisons} conjectured that 

When the input alphabet $\calX$ is binary, the value of $\etaKL$ is relatively easy to compute, cf. ~\cite[Appendix
B]{pw17}%, and in particular it is bounded within a factor 2 by the Hellinger squared diameter of the set $\{P_{Y|X=x}, x\in\calX\}$
. Here we prove that for general $\calX$ determination of $\etaKL$ can be reduced to the binary case.
\begin{theorem} Optimization in~\eqref{eq:etakl_def} can be restricted to pairs $P,Q$ supported on two
points in $\calX$ (same for both).
\label{thm:main}
\end{theorem}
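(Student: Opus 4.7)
The plan is to attack this via a support-reduction argument: show that any near-maximizing pair $(P,Q)$ with $|\supp(P)\cup\supp(Q)|\geq 3$ can be replaced by a pair of strictly smaller support without decreasing the ratio. First, I would reduce to finite $\calX$ by a truncation and approximation argument: tail mass can be folded into a single auxiliary symbol and the data-processing inequality used to show that both the numerator $N\eqdef D(P_{Y|X}\circ P\|P_{Y|X}\circ Q)$ and the denominator $D\eqdef D(P\|Q)$ are preserved to arbitrary accuracy, so that the ratio in \eqref{eq:etakl_def} is also preserved in the limit. Throughout the remainder I assume $|\calX|<\infty$.

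Within finite $\calX$, I would consider a pair $(P^*,Q^*)$ (or, if the sup is not attained, a sequence of near-maximizers after a compactness/normalization step that avoids the degenerate regimes $D\to0$ and $D\to\infty$) with $k \eqdef |\supp(P^*)\cup\supp(Q^*)|$ minimal, and suppose for contradiction $k\geq 3$. Since both $N$ and $D$ are smooth functions of $(P,Q)$ on the relative interior of the joint support, Lagrangian/KKT conditions for the maximization of $N/D$ subject to the simplex constraints give, using $(u\log u)' = 1+\log u$, equations of the form
\begin{equation*}
\sum_y P(y|x_i)\log\frac{(P_{Y|X}\circ P^*)(y)}{(P_{Y|X}\circ Q^*)(y)} - \etaKL\log\frac{P^*(x_i)}{Q^*(x_i)} = c_P \quad\text{for } x_i\in\supp(P^*),
\end{equation*}
and an analogous relation in the gradient with respect to $Q(x_i)$. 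I would then exhibit a perturbation direction $(\mu,\nu)$ supported on $\supp(P^*)\cup\supp(Q^*)$ with $\mu(\calX)=\nu(\calX)=0$ that (i) preserves the simplex structure at first order, and either (ii) strictly increases $N/D$ at second order, contradicting optimality, or (iii) transports mass off one of the support points, producing a new pair with support size $k-1$ and ratio $\geq \etaKL$, contradicting minimality of $k$.

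The principal obstacle is executing (ii)--(iii). Unlike $\chi^2$, whose SDPI constant is a second singular value and for which 2-point sufficiency would be a purely spectral statement, KL is transcendental in the likelihood ratios and with three or more support points the KKT system is genuinely nonlinear. The essential idea, I expect, is to exploit the fact that the map $P\mapsto P_{Y|X}\circ P$ is \emph{affine} while $u\log u$ is \emph{strictly} convex, together with the following rank-deficiency observation: with $k\geq 3$ inputs one has a $(k-1)$-dimensional space of mean-zero perturbations of $P^*$ (and likewise for $Q^*$) but only a one-dimensional KKT condition per variable, so some mean-zero perturbation must keep $D$ unchanged to first order while changing $N$ to first order (or vice versa), immediately violating the stationarity condition $\nabla N = \etaKL\nabla D$ unless the output likelihood vectors $\{P_{Y|X=x_i}\}_{i=1}^k$ satisfy a hidden affine dependence. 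In that degenerate case one should be able to re-allocate mass between two of the $x_i$'s without changing either $N$ or $D$, reducing the support and contradicting the minimality of $k$. Making this argument fully rigorous, in particular handling the boundary of the simplex and carefully verifying the second-order direction in (ii), is where I expect the main technical work of the proof to lie.
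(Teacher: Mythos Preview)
Your proposal has a genuine gap at the crucial step. The dimension-counting heuristic in (ii)--(iii) does not go through: at a stationary point of $N/D$ the KKT conditions $\nabla_P N=\eta\,\nabla_P D+c_P\mathbf{1}$ and $\nabla_Q N=\eta\,\nabla_Q D+c_Q\mathbf{1}$ give exactly $k-1$ independent first-order relations in each variable, matching the $k-1$ mean-zero perturbation directions. So stationarity alone does \emph{not} force the existence of a perturbation that moves $N$ while fixing $D$ (or vice versa); that is precisely what stationarity rules out. You are then left with second-order analysis of a non-quadratic functional, boundary cases, and the possibility that the sup is not attained, none of which you actually carry out. As you yourself note, this is where all the work would be, and it is not clear the program succeeds.

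The paper's argument is entirely different and avoids KKT analysis altogether. The key idea is to \emph{freeze the likelihood ratio} $r(x)=P(x)/Q(x)$ and vary only $\hat Q$, setting $\hat P=r\hat Q$. Then $D(\hat P\|\hat Q)=\sum_x \hat Q(x)\,r(x)\log r(x)$ is \emph{linear} in $\hat Q$, while $D(P_{Y|X}\circ\hat P\|P_{Y|X}\circ\hat Q)$ is convex in $\hat Q$ by the joint convexity of KL composed with the affine map $\hat Q\mapsto(P_{Y|X}\circ r\hat Q,\,P_{Y|X}\circ\hat Q)$. Hence $L_\lambda(\hat P,\hat Q)=N-\lambda D$ is convex in $\hat Q$ over the set $\mathcal{S}=\{\hat Q:\supp(\hat Q)\subseteq\supp(Q),\ \sum_x r(x)\hat Q(x)=1\}$, which is a simplex intersected with a single hyperplane. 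A convex function attains its maximum at an extreme point of $\mathcal{S}$, and such extreme points are supported on at most two atoms. No truncation to finite $\calX$, no perturbation, and no second-order analysis is needed; the linearization of the denominator via the fixed-likelihood-ratio trick is the whole point, and it is exactly what your two-variable perturbation of $(P,Q)$ misses.
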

\begin{proof}
For two distributions $P$ and $Q$ on $\m{X}$ and $\lambda\in(0,1)$ define
\vspace{-2mm}
\begin{align}
L_\lambda(P,Q)&\triangleq D(P_{Y|X}\circ P\| P_{Y|X}\circ Q)-\lambda D(P\|Q).\nonumber
\end{align}
\vspace{-1mm}
We assume that $0<D(P\|Q)<\infty$ as required by the definition of $\etaKL$. 
We will show that we can find two distributions $\hat{P}$ and
$\hat{Q}$ where $\hat{Q}$ is supported on two letters in $\mathrm{supp}(Q)\triangleq \{x\in\m{X} \ : \ Q(x)>0 \}$, and
$L_\lambda(\hat{P},\hat{Q})\geq L_\lambda(P,Q)$. This implies the statement, since
$\etaKL=\sup\left\{\lambda \  : \ \sup_{P,Q} L_\lambda(P,Q)\geq 0 \right\}$.
%$\etaKL\le \lambda$ is equivalent to $\sup_{P,Q} L_\lambda(P,Q)\le 0$.

%\begin{align}
%L_\lambda(\hat{P},\hat{Q})\geq L_\lambda(P,Q).\nonumber
%\end{align}
%From this, it will follow that the optimization on $Q$ in~\eqref{eq:dualopt} can be restricted to $Q\in\m{P}^{\m{X}}_2$.

To that end define the convex set of distributions
\begin{align}
\m{S}\triangleq \bigg\{\hat{Q}: &\mathrm{supp}(\hat{Q})\subseteq \mathrm{supp}(Q),\nonumber\\ 
&\sum_{x\in  \mathrm{supp}(Q)} \frac{P(x)}{Q(x)}\cdot \hat{Q}(x)=1 \bigg\}.\nonumber
\end{align}
Consider the function $g:\m{S}\to\RR$ defined as $g(\hat{Q})=L_\lambda\left(\frac{P}{Q}\hat{Q},\hat{Q}\right)$. Note that $Q\in\m{S}$ and $g(Q)=L_{\lambda}(P,Q)$. Consequently, 
$\max_{\hat{Q}\in\m{S}} g(\hat{Q})\geq L_{\lambda}(P,Q)$.
Note that
\begin{align}
\hat{Q}\mapsto D\left(P_{Y|X}\circ \frac{P}{Q}\hat{Q}\bigg\| P_{Y|X}\circ \hat{Q}\right)\nonumber
\end{align}
is convex by convexity of $(P,Q)\mapsto D(P\|Q)$, and that
\begin{align}
\hat{Q}\mapsto D\left(\frac{P}{Q}\hat{Q}\bigg\| \hat{Q}\right)=\sum_x \hat Q(x) {P(x)\over Q(x)}\log {P(x)\over
Q(x)}\nonumber
\end{align}
is linear. Thus, $\hat{Q} \mapsto g(\hat Q)$ is convex on $\m{S}$. It therefore follows that $\max_{\hat{Q}\in\m{S}} g(\hat{Q})$ is obtained at an extreme point of $\m{S}$. Since $\m{S}$ is the intersection of the simplex with a hyperplane, its extreme points are supported on at most two atoms. 
\end{proof}

Paired with~\cite[Appendix B]{pw17} we get a corollary bounding $\etaKL$ in terms of the
Hellinger-diameter of the channel:
	\begin{align}
{1\over 2} \mathrm{diam_{H^2}}(P_{Y|X}) \le \etaKL &\le g\left({1\over 2}
\mathrm{diam_{H^2}}(P_{Y|X})\right)\nonumber\\
		&\le \mathrm{diam_{H^2}}(P_{Y|X})
	\end{align}
	where $g(t)\eqdef 2t\left(1-\frac{t}{2}\right)$, $\mathrm{diam_{H^2}}(P_{Y|X}) =
	\sup_{x,x'} H^2(P_{Y|X=x}, P_{Y|X=x'})$ and $H^2(P,Q)=2-2\int\sqrt{dP dQ}$.

Note that the only property of divergence that we have used in the proof of Theorem~\ref{thm:main} is convexity of $(P,Q)\mapsto D(P,Q)$. This property is shared
by all $f$-divergences, cf.~\cite{csiszar67}. In other words we proved:
\begin{theorem} Let $\eta_f = \sup {D_f(P_{Y|X} \circ P \| P_{Y|X} \circ Q) \over D_f(P\|Q)}$ optimized over all $P,Q
\in \mathcal{P}(\calX)$ with $0<D_f(P,Q)<\infty$. Then the optimization can be restricted to pairs $P,Q$ supported on two
common points in $\calX$. 
\end{theorem}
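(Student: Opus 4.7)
The plan is to reproduce the proof of Theorem~\ref{thm:main} essentially verbatim, checking at each step that we used only joint convexity of $(P,Q)\mapsto D_f(P\|Q)$ together with the explicit form $D_f(P\|Q)=\sum_x Q(x)f(P(x)/Q(x))$. All else carries through mechanically.

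First I would introduce the analogue
\[
L^f_\lambda(P,Q) \eqdef D_f(P_{Y|X}\circ P\|P_{Y|X}\circ Q) - \lambda D_f(P\|Q)
\]
and note, exactly as for $\etaKL$, that $\eta_f = \sup\{\lambda : \sup_{P,Q} L^f_\lambda(P,Q)\ge 0\}$. Fix any $(P,Q)$ with $0<D_f(P\|Q)<\infty$ and reuse the same convex set
\[
\m{S} = \Bigl\{\hat Q : \supp(\hat Q)\subseteq \supp(Q),\ \sum_{x\in\supp(Q)} \tfrac{P(x)}{Q(x)}\hat Q(x)=1\Bigr\},
\]
which is the intersection of a simplex with a single affine hyperplane and whose extreme points are therefore supported on at most two atoms of $\supp(Q)$. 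Define $g(\hat Q) \eqdef L^f_\lambda\bigl(\tfrac{P}{Q}\hat Q,\hat Q\bigr)$; since $Q\in\m{S}$ and $g(Q)=L^f_\lambda(P,Q)$, it suffices to show that $g$ is convex on $\m{S}$, so that its maximum is attained at an extreme point.

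Convexity of the first term $\hat Q\mapsto D_f(P_{Y|X}\circ\tfrac{P}{Q}\hat Q\|P_{Y|X}\circ\hat Q)$ follows from the joint convexity of $(P,Q)\mapsto D_f(P\|Q)$, which is the standard property shared by all $f$-divergences (Csisz\'ar~\cite{csiszar67}), together with the fact that both arguments are affine in $\hat Q$. The second term is actually linear in $\hat Q$: using the definition and cancelling $\hat Q(x)$,
\[
D_f\Bigl(\tfrac{P}{Q}\hat Q\,\Big\|\,\hat Q\Bigr) = \sum_{x\in\supp(\hat Q)} \hat Q(x)\, f\Bigl(\tfrac{P(x)}{Q(x)}\Bigr),
\]
which is manifestly affine in $\hat Q$. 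Hence $g$ is convex, its maximum over $\m{S}$ is attained at an extreme point $\hat Q^\star$ supported on at most two common atoms, and the corresponding $\hat P^\star=\tfrac{P}{Q}\hat Q^\star$ shares the same support. This yields the claim.

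The only step that genuinely requires thought is verifying that the reduction does not silently eliminate the constraint $0<D_f(\hat P^\star\|\hat Q^\star)<\infty$; but $\hat P^\star\ll\hat Q^\star$ by construction, and degenerate extreme points where $\hat P^\star=\hat Q^\star$ contribute $g(\hat Q^\star)\le 0\le L^f_\lambda(P,Q)$ only in the trivial case $\lambda\le 0$, which is outside the range of interest for $\eta_f$. Beyond this bookkeeping, there is no real obstacle: the theorem is simply Theorem~\ref{thm:main} with $D$ replaced by $D_f$, relying only on properties the authors have already flagged.
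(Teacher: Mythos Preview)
Your proposal is correct and follows exactly the paper's approach: the paper proves Theorem~2 simply by observing that the proof of Theorem~1 used only the joint convexity of $(P,Q)\mapsto D(P\|Q)$ (together with the explicit form making the second term linear in $\hat Q$), a property shared by all $f$-divergences. Your write-up just spells this out step by step, with the added bookkeeping about the $0<D_f<\infty$ constraint that the paper leaves implicit.
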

This fact was conjectured in~\cite[Open Problem 7.4]{cohen1998comparisons}.%, and as shown there
%implies the following:
%\begin{corollary} 2-contractivity \textbf{TODO}.
%\end{corollary}

%\textbf{TODO:} Add how this improves Bradley-Peyra.

There are two other noteworthy results that our technique entails. First,
%we see that we have shown, in fact, that 
a moment of reflection confirms that we, in fact, have shown that
the upper concave envelope of the set $\cup_{P_X,Q_X} \{(D_f(P_X\|Q_X),
D_f(P_Y\|Q_Y))\}$ is unchanged if we restrict the union to pairs $P_X,Q_X$ supported on two
points.

Second, a similar argument holds for the post-SDPI coefficient of a
channel~\cite{YP2019-postsdpi}, defined as 
$$ \eta^{(p)}_{KL}(P_{Y|X}) = \inf\{\eta:
I(U;X) \le \eta I(U;Y) \quad \forall X-Y-U\}\,.$$ 
Namely, we have that $\eta^{(p)}_{KL}$ can be
computed by restricting $X$ to take two values. Indeed, fix an arbitrary $P_{X,Y,U}$ s.t. $X-Y-U$. 
As shown in~\cite[Theorem 4]{pw17} one can
safely assume $U$ to be binary. Now, consider a set $\m{S}$ of all $\hat P_X$ such that the joint
distribution $\hat P_{X,Y,U} = \hat P_X P_{Y|X}
P_{U|Y}$ satisfies $\hat P_U=P_U$. Since $U$ is binary, $\m{S}$ is an intersection of a hyperplane with
a simplex. Now, the function $\hat P_X \mapsto \hat I(U;X)-\lambda \hat
I(U;Y)$ is linear in $\hat P_X$ over $\m{S}$. Consequently, the maximum (and the minimum) of this
function is attained at a binary $\hat P_X$.

\end{document}